\DeclareMathOperator{\supp}{supp}
\DeclareMathOperator{\weight}{wt}
\newcommand{\LW}{\weight_{\scriptscriptstyle\mathsf{L}}}
\DeclareMathOperator{\dist}{d}
\newcommand{\me}{\mathrm{e}}
\newcommand{\entH}{\mathrm{H}_e}
\newcommand{\card}[1]{\left\vert{#1}\right\vert} 	
\newcommand{\floor}[1]{\lfloor{#1}\rfloor} 	
\newcommand{\ceil}[1]{\lceil{#1}\rceil} 	
\newcommand{\set}[1]{\lbrace{#1}\rbrace} 	
\newcommand{\unilee}{\mathcal{S}_{t, m}^{(n)}}
\newcommand{\leeSphere}[2]{\mathcal{S}_{#1, #2}^{(n)}}
\newcommand{\parts}[2]{\mathcal{P}_{#2}({#1})} 
\newcommand{\typeset}[2]{\mathcal{V}_{#1}^{({#2})}} 
\begin{document}

\newcommand{\old}[1]{{\color{gray}#1}}

\theoremstyle{remark}
\theoremstyle{definition}
\newtheorem{definition}{Definition}
\newtheorem{exmp}{Example}
\newtheorem{remark}{Remark}
\newtheorem{lemma}{Lemma}
\newtheorem{theorem}{Theorem}
\newtheorem{problem}{Problem}
\newtheorem{corollary}{Corollary}
\newtheorem{claim}{Claim}

\newcommand{\jbcomment}[1]{{\color{red!55!violet}Je: #1}}
\newcommand{\JB}[1]{{\color{olive!50!cyan} #1}}
\newcommand{\hb}[1]{{\color{cyan!60!gray} #1}}
\newcommand{\GL}[1]{{\color{red} #1}}
\newcommand{\GLC}[1]{{\color{red!55}GL: #1}}
\newcommand{\vecx}{\mathbf{x}}
\newcommand{\prob}{\mathbb{P}}
\newcommand{\intring}{\mathbb{Z}}


\title{On the Properties of Error Patterns in the Constant Lee Weight Channel}

 \author{%
   \IEEEauthorblockN{Jessica Bariffi\IEEEauthorrefmark{1}\IEEEauthorrefmark{2},
     Hannes Bartz\IEEEauthorrefmark{1},
     Gianluigi Liva\IEEEauthorrefmark{1},
     and Joachim Rosenthal\IEEEauthorrefmark{2}}
   \IEEEauthorblockA{\IEEEauthorrefmark{1}%
            Institute of Communication and Navigation,
            German Aerospace Center,
            82234 Wessling, Germany\\
            Email:{\tt \{jessica.bariffi,hannes.bartz,gianluigi.liva\}@dlr.de}}
   \IEEEauthorblockA{\IEEEauthorrefmark{2}%
            Institute of Mathematics,
            University of Zurich,
            CH-8057 Zürich, Switzerland\\
            Email: {\tt rosenthal@math.uzh.ch}}
 }

 \maketitle

\thispagestyle{plain}
\pagestyle{plain}


\begin{abstract}
The problem of scalar multiplication applied to vectors is considered in the Lee metric. Unlike in other metrics, the Lee weight of a vector may be increased or decreased by the product with a nonzero, nontrivial scalar.
This problem is of particular interest for cryptographic applications, like for example Lee metric code-based cryptosystems, since an attacker may use scalar multiplication to reduce the Lee weight of the error vector and thus to reduce the complexity of the corresponding generic decoder.
The scalar multiplication problem is analyzed in the asymptotic regime.
Furthermore, the construction of a vector with constant Lee weight using integer partitions is analyzed and an efficient method for drawing vectors of constant Lee weight uniformly at random from the set of all such vectors is given.
\end{abstract}

\thispagestyle{empty}
\setcounter{page}{1}


\section{Introduction}\label{sec:not}

In the late 1950s, relating to transmitting symbols from a finite prime field $\mathbb{F}_q$, the Lee metric was introduced in \cite{Ulrich,lee1958some}. Error correcting codes endowed with the Lee metric (like BCH codes, dense error-correcting codes or codes with maximum Lee distance) were constructed and applied in various different manners  \cite{prange1959use,berlekamp1966negacyclic,golomb1968algebraic,chiang1971channels,roth1994lee,etzion2010dense, alderson2013maximum}. Recently, the Lee metric was applied to DNA storage systems \cite{gabrys2017asymmetric} and considered for cryptographic applications \cite{horlemann2019information}. New families of error correcting codes endowed with the Lee metric together with an iterative decoding algorithm were proposed \cite{santini2020low} while information set decoding (ISD) in the Lee metric has been analyzed\cite{horlemann2019information, weger2020information}.

ISD is one way to solve the well-known generic (syndrome) decoding problem, which aims at decoding an arbitrary linear code efficiently without knowing or using the structure of the code. This problem is fundamental for code-based cryptography and was shown to be NP-complete in both the Hamming metric \cite{barg1994some, berlekamp1978inherent} and the Lee metric \cite{weger2020hardness}. The desirable feature of generic (syndrome) decoding is to succeed in correcting an error vector $\mathbf{e}$ as long as its corresponding weight is small, where small refers to the Gilbert-Varshamov bound \cite{gilbert1952comparison, varshamov1957estimate}. In fact, syndrome decoding has an exponential complexity in the weight of the error for both the Hamming and the Lee weight.
From an adversarial point of view, the goal is to reduce the weight of the introduced error vector in order to make the generic (syndrome) decoding problem more feasible. In fact, while the Hamming weight of a vector with entries from a finite field is invariant under multiplication with a nonzero scalar, the Lee weight of a vector can be increased or decreased by the product with a scalar. Understanding under which conditions (and with what probability) the Lee weight on the error vector $\mathbf{e}$ is reduced represents a key preliminary step in the design of Lee metric code-based cryptosystems. We will refer to this problem as \textit{scalar multiplication problem}.

In this paper, we consider an additive channel model that adds an error vector of a fixed Lee weight to the transmitted codeword. We will refer to this channel as the \textit{constant Lee weight channel}. We present an algorithm that draws a vector of length $n$ and fixed Lee weight $t$ over the ring of integers $\intring_m$ modulo $m$ uniformly at random from the set of vectors with the same parameters. Introducing errors uniformly at random is important from a cryptographic point of view in order to hide the structure of the error pattern. We will then derive the marginal distribution of the constant Lee weight channel in the limit of large block lengths $n$. This result enables to analyze how the Lee weight of a given error vector changes when multiplied by a random nonzero scalar, in the asymptotic regime. We show that, under certain conditions, the Lee weight of such an error vector will not decrease after scalar multiplication with high probability.

The paper is organized as follows. Section \ref{sec:prel} provides the notations and preliminaries needed for the course of the paper. In Section \ref{sec:LeeChannel} we introduce the constant Lee weight channel and provide a uniform construction of an error vector of given Lee weight among all possible vectors of the same Lee weight. The scalar multiplication problem is introduced in Section \ref{sec:scalarLee}. We state the problem in a finite length setting and analyze it in the asymptotic regime. Conclusions are stated in Section \ref{sec:conc}.

\section{Notation and Preliminaries}\label{sec:prel}

We denote by $\intring_m$ the ring of integers modulo $m$, where $m$ is a positive integer. To simplify the reading, vectors will be denoted by boldface lower-case letters.

\subsection{The Lee Metric}

\begin{definition}\label{def:leeweight}
    The \textit{Lee weight of a scalar} $a \in \intring_m$ is defined as\vspace{-2mm}
    \begin{align}
        \LW(a) := \min(a, m-a).
    \end{align}
    The \textit{Lee weight of a vector} $\vecx \in \intring_m^n$ of length $n$ is defined as the sum of the Lee weights of its entries, i.e.
    \begin{align}
        \LW(\vecx) := \sum_{i=1}^n \LW(x_i).
    \end{align}
\end{definition}

Note that the Lee weight of an element $a \in \intring_m$ is upper bounded by $\lfloor m/2 \rfloor$. Hence, the Lee weight of a length-$n$ vector $\mathbf{x}$ over $\intring_m$ is at most $n\cdot \lfloor m/2 \rfloor$. To simplify the notation, we define
\[
r:=\floor{m/2}.
\]
Furthermore, we observe that if $m \in \{ 2, 3 \}$ the Lee weight is equivalent to the Hamming weight.

If we consider the elements of $\intring_m$ as points placed along a circle such that the circle is divided into $m$ arcs of equal length, then the Lee distance between two distinct values $a$ and $b$ can be interpreted as the smallest number of arcs separating the two values. Therefore, the following property holds
\begin{align}\label{property:symm_leeweight}
	\LW (a) = \LW (m- a) \; \text{ for every } a \in \lbrace 1, \dots , r \rbrace .
\end{align}

The Lee distance between two vectors is defined as follows.
\begin{definition}\label{def:leedistance}
	Let $\vecx, \mathbf{y} \in \intring_m^n$. The \textit{Lee distance} between $\mathbf{x}$ and $\mathbf{y}$ is given by the Lee weight of their difference, i.e. $$\dist_L (\mathbf{x}, \mathbf{y}) := \LW(\mathbf{x} - \mathbf{y}).$$
\end{definition}
It is well-known that the Lee distance indeed induces a metric.

\subsection{Useful Results from Information Theory}
Let $X$ be a random variable over an alphabet $\mathcal{X}$ with probability distribution $P$, where $P(x):= \prob(X=x)$ with $x \in \mathcal{X}$. The entropy $H(X)$ is defined as
\begin{align}
    H(X) := -\sum_{x \in \mathcal{X}} P(x) \log(P(x)).
\end{align}
The Kullback-Leibler divergence between two distributions $Q$ and $P$ is denoted as
\begin{align}
    D(Q\, || \, P) := \sum_{x \in \mathcal{X}} Q(x) \log\left(\frac{Q(x)}{P(x)} \right)
\end{align}

\begin{theorem}[Conditional Limit Theorem~{\cite[Theorem 11.6.2]{CoverThomasBook}}]\label{thm:CondLT}
    Let $E$ be a closed convex subset of probability distributions over a given alphabet $\mathcal{X}$ and let $Q$ be a distribution not in $E$ over the same alphabet $\mathcal{X}$. Consider $X_1, \ldots, X_n$ to be discrete random variables drawn i.i.d. $\sim Q$ and let $P^\star = \arg\min_{P\in E} D(P\, || \, Q)$. Denote by $X^n$ the random sequence $(X_1, \hdots, X_n)$ and $P_{X^n}$ its empirical distribution. Then for any $a\in\mathcal{X}$
    \begin{align}\label{prob:CondLT}
        \prob \left( X_1 = a \, | \, P_{X^n} \in E \right) \longrightarrow P^\star(a)
    \end{align}
    in probability as $n$ grows large.
\end{theorem}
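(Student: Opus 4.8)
The plan is to prove this by the method of types, using Sanov's theorem for the large-deviations estimate and the convexity of $E$ through a Pythagorean inequality for the relative entropy. The first move is to reduce the conditional marginal to a conditional expected frequency. Since $X_1, \dots, X_n$ are i.i.d.\ (hence exchangeable) and the conditioning event $\set{P_{X^n} \in E}$ depends only on the unordered multiset of samples, every coordinate carries the same conditional marginal, so averaging over coordinates gives
\[
\prob\!\left(X_1 = a \mid P_{X^n} \in E\right) = \frac{1}{n}\sum_{i=1}^n \prob\!\left(X_i = a \mid P_{X^n} \in E\right) = \mathbb{E}\!\left[P_{X^n}(a) \mid P_{X^n} \in E\right].
\]
It therefore suffices to show that the empirical distribution, conditioned on lying in $E$, concentrates at $P^\star$, since then $P_{X^n}(a) \to P^\star(a)$ in conditional probability.

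For the concentration step I would fix $\delta > 0$, set $B_\delta = \set{P : D(P \,||\, P^\star) < \delta}$, and split $E$ into $E \cap B_\delta$ and $A_\delta := E \setminus B_\delta$. Sanov's theorem, together with the polynomial bound $(n+1)^{\card{\mathcal{X}}}$ on the number of types, yields
\[
\prob\!\left(P_{X^n} \in A_\delta\right) \le (n+1)^{\card{\mathcal{X}}}\, 2^{-n \min_{P \in A_\delta} D(P \,||\, Q)}, \qquad \prob\!\left(P_{X^n} \in E\right) \ge (n+1)^{-\card{\mathcal{X}}}\, 2^{-n D(P^\star \,||\, Q)}.
\]
Dividing the two bounds, the conditional probability of landing in $A_\delta$ decays like $2^{-n(\min_{P \in A_\delta} D(P\,||\,Q)\, -\, D(P^\star\,||\,Q))}$ up to polynomial factors, so the whole argument hinges on a strictly positive exponent gap.

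Establishing that gap is the step I expect to be the main obstacle, and it is precisely where convexity and closedness of $E$ are used. The key is the Pythagorean relation for the I-projection $P^\star = \argmin_{P \in E} D(P \,||\, Q)$: for every $P \in E$,
\[
D(P \,||\, Q) \ge D(P \,||\, P^\star) + D(P^\star \,||\, Q).
\]
I would obtain this from the first-order optimality of $P^\star$ by differentiating $\lambda \mapsto D((1-\lambda)P^\star + \lambda P \,||\, Q)$ at $\lambda = 0$ along the segment from $P^\star$ to $P$, which stays in $E$ by convexity, while closedness guarantees the minimizer exists. The inequality then gives $D(P \,||\, Q) \ge D(P^\star \,||\, Q) + \delta$ for all $P \in A_\delta$, so the exponent gap is at least $\delta > 0$ and consequently $\prob(P_{X^n} \in A_\delta \mid P_{X^n} \in E) \to 0$.

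To conclude I would combine the pieces. On the conditioning event the type lies in $B_\delta$ with probability tending to one, so $D(P_{X^n} \,||\, P^\star) < \delta$ there, and Pinsker's inequality forces $\lVert P_{X^n} - P^\star \rVert_1$ to be small; hence $P_{X^n}(a) \to P^\star(a)$ in conditional probability for each $a \in \mathcal{X}$. Since $P_{X^n}(a) \in [0,1]$ is bounded, bounded convergence upgrades this to convergence of $\mathbb{E}[P_{X^n}(a) \mid P_{X^n} \in E]$, which by the first reduction equals $\prob(X_1 = a \mid P_{X^n} \in E)$. Letting $\delta \to 0$ then completes the argument.
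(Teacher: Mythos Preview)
The paper does not prove this statement at all; it is quoted verbatim as the Conditional Limit Theorem from Cover and Thomas and used as a black box (only Corollary~\ref{thm:asymptotic} is derived from it). So there is no in-paper proof to compare against.

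On its own merits, your proposal is the standard Cover--Thomas argument and is essentially correct: the exchangeability reduction to $\mathbb{E}[P_{X^n}(a)\mid P_{X^n}\in E]$, the Sanov upper bound on $A_\delta$, the Pythagorean inequality $D(P\,||\,Q)\ge D(P\,||\,P^\star)+D(P^\star\,||\,Q)$ to extract a strictly positive exponent gap, and the Pinsker/bounded-convergence finish are exactly the textbook steps. The one place to be a bit more careful is the lower bound $\prob(P_{X^n}\in E)\ge (n+1)^{-\card{\mathcal{X}}}2^{-nD(P^\star||Q)}$: this holds only if $P^\star$ is itself an $n$-type, which it generally is not. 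You need a sequence of $n$-types $P_n\in E$ with $D(P_n\,||\,Q)\to D(P^\star\,||\,Q)$, which in turn requires a mild regularity assumption on $E$ (e.g.\ that $E$ is the closure of its interior, as Cover--Thomas implicitly assume). With that caveat addressed, the argument goes through.
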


\subsection{Combinatorics}
\begin{definition}\label{def:partition}
    Let $t$ and $s$ be positive integers. An \textit{integer partition} of $t$ into $s$ parts is an $s$-tuple $\lambda := (\lambda_1,\dots , \lambda_s)$ of positive integers satisfying the following two properties:
    \begin{itemize}
        \item[i.] $\lambda_1 + \hdots + \lambda_s = t$,
        \item[ii.] $\lambda_1 \geq \lambda_2 \geq \hdots \geq \lambda_s$.
    \end{itemize}
    The elements $\lambda_i$ are called \textit{parts} and we say that $s$ is the \textit{length} of the partition $\lambda$.
\end{definition}
Note that the order of the parts does not matter. This means that, for instance, the tuples $(1, 1, 2)$, $(1, 2, 1)$ and $(2, 1, 1)$ are all identical and represented only by $(2, 1, 1)$. We will denote by $\Pi_\lambda$ the set of all permutations of an integer partition $\lambda$. Let $n_i$ denote the number of occurrences of a positive integer $i$ in an integer partition $\lambda$ of $t$, where $i \in \set{1, \hdots, t}$, then $\card{ \Pi_\lambda } = \binom{t}{n_1, \hdots, n_t} = \frac{t!}{n_1!\dots n_t!}$. \\
In the following, we use $\parts{t}{}$ to denote the set of integer partitions of $t$. We write $\parts{t}{k}$ instead, if we restrict $\parts{t}{}$ to those partitions with part sizes not exceeding some fixed nonnegative integer value $k$. Note that for any $\lambda \in \parts{t}{k}$ its length $\ell_\lambda$ is bounded by $\ceil{\frac{t}{k}} \leq \ell_\lambda \leq t$.\\

We will now introduce a definition describing vectors whose Lee weight decomposition is based on a given integer partition.

\begin{definition}\label{def:type_lambda}
    For a positive integer $n$ and a given partition $\lambda \in \parts{t}{r}$ of a positive integer $t$, we say that a length-$n$ vector $\vecx$ has \textit{weight decomposition $\lambda$ over $\intring_m$} if there is a one-to-one correspondence between the Lee weight of the nonzero entries of $\vecx$ and the parts of $\lambda$.
\end{definition}

\begin{exmp}
    Let $n = 5$ and let $\lambda = (2, 1, 1)$ be an integer partition of $t = 4$ over $\intring_7$. All vectors of length $n$ over $\intring_7$ consisting of one element of Lee weight $2$ and two elements of Lee weight $1$ have weight decomposition $\lambda$.
\end{exmp}

We will denote the set of all vectors of length $n$ of the same weight decomposition $\lambda \in \parts{t}{}$ by $\typeset{t, \lambda}{n}$.

\section{The Constant Lee Weight Channel}\label{sec:LeeChannel}

Let us consider a channel
\begin{align*}
    \mathbf{y} = \vecx + \mathbf{e},
\end{align*}
where $\mathbf{y}, \vecx$ and $\mathbf{e}$ are length-$n$ vectors over $\intring_m$ and the channel introduces the error vector $\mathbf{e}$ uniformly at random from the set $\unilee$ of all vectors in $\intring_m^n$ with a fixed Lee weight $t$, i.e. \[\unilee := \set{ \mathbf{e} \in \intring_m^n \, | \, \LW(\mathbf{e}) = t }.\]

\subsection{Marginal Channel Distribution}

Since certain decoder types (e.g., iterative decoders employed for low-density parity-check codes defined over integer rings) require the knowledge of the channel's marginal conditional distribution, our goal is to describe the marginal distribution $P_e$, for a generic element $E$ of the error.
\begin{lemma}\label{lemma:marginal_dist}
    The marginal distribution of a constant Lee weight channel over $\intring_m$ is given by
    $$P_e^\star = \frac{1}{\sum_{j=0}^{m-1}\exp(-\beta \LW(j))} \exp\left(-\beta\LW(e)\right),$$
    for some constant $\beta > 0$.
\end{lemma}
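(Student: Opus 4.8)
The plan is to realize the uniform law on $\unilee$ as a conditioned i.i.d.\ ensemble and then invoke the Conditional Limit Theorem (Theorem~\ref{thm:CondLT}). First I would observe that drawing $\mathbf{e}$ uniformly from $\unilee$ is the same as drawing the coordinates $E_1,\dots,E_n$ independently and uniformly from $\intring_m$ and then conditioning on the event that the total Lee weight equals $t$. Indeed, under the i.i.d.\ uniform law every vector in $\intring_m^n$ has probability $m^{-n}$, so conditioning on $\{\LW(\mathbf{e})=t\}$ assigns equal mass to every element of $\unilee$. Writing $P_{E^n}$ for the empirical distribution of the coordinates, the conditioning event $\sum_i \LW(E_i)=t$ becomes $\sum_{j=0}^{m-1}P_{E^n}(j)\LW(j)=t/n=:\delta$, i.e.\ the empirical distribution is required to lie in the set
\begin{align}
    E := \Big\{\, P : \sum_{j=0}^{m-1} P(j)\LW(j) = \delta \,\Big\}.
\end{align}

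Next I would apply Theorem~\ref{thm:CondLT} with the reference distribution $Q$ equal to the uniform law on $\intring_m$, i.e.\ $Q(j)=1/m$. The set $E$ is the intersection of the probability simplex with an affine hyperplane, hence closed and convex, and $Q\notin E$ precisely when $\delta$ differs from the average Lee weight $\frac{1}{m}\sum_{j}\LW(j)$ under the uniform law, which is the regime of interest. The theorem then guarantees that the single-coordinate marginal $P_e=\prob(E_1=e\mid P_{E^n}\in E)$ converges in probability, as $n$ grows with $t/n\to\delta$ fixed, to the information projection $P^\star=\argmin_{P\in E} D(P \, || \, Q)$.

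It then remains to compute $P^\star$ explicitly. Because $Q$ is uniform, $D(P \, || \, Q)=\log m - H(P)$, so minimizing the divergence over $E$ is equivalent to maximizing the entropy $H(P)$ subject to the single moment constraint $\sum_j P(j)\LW(j)=\delta$ together with normalization. Setting up the Lagrangian with multipliers $\nu$ for normalization and $\beta$ for the Lee-weight constraint and differentiating in $P(j)$ yields $\log P(j)=-1-\nu-\beta\LW(j)$, so that
\begin{align}
    P^\star(e) = \frac{\exp(-\beta\LW(e))}{\sum_{j=0}^{m-1}\exp(-\beta\LW(j))},
\end{align}
the claimed Gibbs form, where the partition function absorbs the normalization multiplier. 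This is exactly the stated expression for $P_e$.

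The last point, and the one requiring the most care, is the sign of $\beta$. The multiplier $\beta$ is fixed implicitly by the constraint $\mathbb{E}_{P^\star}[\LW]=\delta$; writing $Z(\beta)=\sum_j\exp(-\beta\LW(j))$ and differentiating $\mathbb{E}_{P^\star}[\LW]=-\frac{d}{d\beta}\log Z(\beta)$ shows $\frac{d}{d\beta}\mathbb{E}_{P^\star}[\LW]=-\mathrm{Var}_{P^\star}(\LW)<0$, so the map $\beta\mapsto\mathbb{E}_{P^\star}[\LW]$ is strictly decreasing with value $\frac{1}{m}\sum_j\LW(j)$ at $\beta=0$. Hence $\beta>0$ holds exactly when the target Lee-weight density $\delta$ lies below the uniform average, which is the operationally relevant low-weight regime for the error vector. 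I expect the main subtlety to be the justification that the Conditional Limit Theorem applies to an equality (hence lower-dimensional) constraint set and that the conditioning event is nonempty for the values of $t$ under consideration; once the information projection onto $E$ is identified with the exponential-family member above, the convergence and the explicit form follow directly.
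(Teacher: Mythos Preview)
Your argument is correct and is at bottom the same maximum-entropy/$I$-projection computation the paper carries out; the paper simply cites \cite[Ch.~12]{CoverThomasBook} and jumps straight to ``maximize $\entH(\mathbf{P})$ subject to $\sum_e \LW(e)P_e=\delta$'', then solves the Lagrangian to obtain \eqref{eq:boltzmann}. What you do differently is front-load the justification: you spell out that the uniform law on $\unilee$ is the i.i.d.\ uniform ensemble conditioned on the empirical Lee-weight constraint, and you invoke Theorem~\ref{thm:CondLT} explicitly to identify the limiting marginal with the $I$-projection, only then reducing (via $D(P\,||\,U)=\log m - H(P)$) to the entropy maximization the paper starts from. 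The paper defers exactly this conditioning/CLT step to Corollary~\ref{thm:asymptotic}, so in effect you have merged the proofs of Lemma~\ref{lemma:marginal_dist} and Corollary~\ref{thm:asymptotic} into one pass; this buys a cleaner logical story at the cost of proving slightly more than the lemma alone requires. Your closing remark that $\beta>0$ holds precisely when $\delta$ lies below the uniform average Lee weight is a genuine refinement: the paper asserts $\beta>0$ and writes down the implicit equation for $\beta$ without isolating the regime in which the sign condition actually holds.
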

\begin{proof}
    Following \cite[Ch. 12]{CoverThomasBook}, we are looking for a distribution $\mathbf{P} = (P_0, \dots , P_{m-1})$ that maximizes the entropy function 
    \begin{equation}\label{eq:entropy}
        \entH(\mathbf{P}):=-\!\!\sum_{e=0, P_e\neq 0}^{m-1}\!\! P_e \log P_e
    \end{equation}
    under the constraint that the Lee weight of the vector is $t$, or equivalently, that the normalized Lee weight of the error vector is $\delta := t/n$, i.e.
    \begin{align}\label{eq:constraint} 
       \sum_{e=0}^{m-1} \LW(e) P_e=\delta.
    \end{align}
    Let us introduce a Lagrange multiplier $\beta > 0$, which is the solution to 
    \[
    \delta = \frac{(k-1)\me^{(k+1)\beta} - k\me^{k\beta} + \me^{\beta}}{(\me^{\beta k} - 1)(\me^\beta - 1)}
    \] with $k = r + 1$.
    Then the optimization problem has the following solution
    \begin{align}\label{eq:boltzmann}
        P_e^\star=\kappa\exp\left(-\beta\LW(e)\right), 
    \end{align}
    where $\kappa$ is a normalization constant enforcing $\sum_e P_e^\star=1$.
\end{proof}
The solution \eqref{eq:boltzmann} is closely related to the problem in statistical mechanics of finding the distribution of the energy state of a given system \cite{boltzmann1868studien,gibbs1902elementary, CoverThomasBook}. Here, we may interpret the energy value of the particles as the Lee weight $\LW (e)$ of an element $e \in \intring_m$. Note that for the channel law determined by Lemma \ref{lemma:marginal_dist}, the optimum decoder will seek for the codeword at minimum Lee distance from the channel output $\mathbf{y}$.

\subsection{Error Pattern Construction}

In the following we will present an algorithm that draws a vector uniformly at random from $\unilee$ for given parameters $n, t$ and $m$. The idea is inspired by the algorithm presented in \cite{santini2020low}. We start from partitioning the desired Lee weight $t$ into integer parts of size at most $r$, since the Lee weight of any $a \in \intring_m$ is at most $r$. The main difference to the algorithm presented in \cite[Lemmas 2 and 3]{santini2020low}, and crucial to design the vector uniformly at random from $\unilee$, is that the integer partition of $t$ is not chosen uniformly at random from the set of all integer partitions $\parts{t}{r}$ of $t$. In fact, picking a partition uniformly at random from $\parts{t}{r}$ yields that some of the vectors in $\unilee$ are more probable than others. Therefore, we need to understand the number of vectors with weight decomposition $\lambda$, for a fixed partition $\lambda \in \parts{t}{r}$. The following result gives an answer to this question.\\
\begin{lemma}\label{lemma:card_typeset}
    Let $n, m$ and $t$ be positive integers with $t \leq n$ and consider the set of partitions $\parts{t}{r}$ of $t$ with part sizes not exceeding $r$. For any $\lambda \in \parts{t}{r}$ the number of vectors of length $n$ over $\intring_m$ with weight decomposition $\lambda$ is given by
    \begin{align}
        \card{\typeset{t, \lambda}{n}} = \begin{cases}
            2^{\ell_\lambda} \card{\Pi_{\lambda}}\binom{n}{\ell_\lambda} & \text{ if } m \text{ is odd}, \\
            2^{\ell_\lambda - c_{r, \lambda}} \card{\Pi_{\lambda}}\binom{n}{\ell_\lambda} & \text{ else }
        \end{cases}
    \end{align}
    where $c_{r, \lambda} = \card{ \set{ i \in \set{ 1, \dots, \ell_\lambda} \, | \, \lambda_i = r } }$.
\end{lemma}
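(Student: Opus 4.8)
The plan is to construct every vector with weight decomposition $\lambda$ through a sequence of independent choices and to count the possibilities at each stage. Writing $\ell := \ell_\lambda$, such a vector has exactly $\ell$ nonzero coordinates, and the multiset of Lee weights of those coordinates equals the multiset of parts of $\lambda$. Accordingly, I would first select the set of $\ell$ nonzero positions among the $n$ coordinates, contributing a factor $\binom{n}{\ell}$; then decide which part of $\lambda$ serves as the Lee weight of each selected position; and finally, for each such position, pick an actual element of $\intring_m$ realising the prescribed Lee weight.

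For the second step, distributing the parts of $\lambda$ over the $\ell$ (now distinguishable) positions is the same as counting the distinct arrangements of the multiset of parts, which is exactly $\card{\Pi_\lambda}$. Combined with the position choice this accounts for $\binom{n}{\ell}\card{\Pi_\lambda}$ distinct Lee-weight profiles, equivalently the number of ways to arrange $n - \ell$ zeros together with the parts of $\lambda$ across the $n$ coordinates.

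The last factor requires counting, for a fixed target weight $w \in \set{1, \dots, r}$, the number of $a \in \intring_m$ with $\LW(a) = w$; the final contribution is the product of these counts over the $\ell$ parts. Since $\LW(a) = \min(a, m-a)$, the only candidates are $a = w$ and $a = m - w$, and a short check (using $w \le r \le m/2$) confirms both indeed have Lee weight $w$. These two solutions are distinct unless $w = m - w$, which is the one subtle point of the proof and the source of the case split: for $m$ odd this never happens, so every part contributes a factor $2$ and the total is $2^{\ell}$; for $m$ even the two coincide exactly when $w = r = m/2$, in which case only the single element $a = r$ occurs. As $c_{r, \lambda}$ counts precisely the parts equal to $r$, the even case contributes $2^{\ell - c_{r, \lambda}}$. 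Multiplying the three factors then yields the stated formula; I would finish by confirming that no element of $\intring_m$ of Lee weight $w$ is missed, so that the coincidence at $w = r$ is genuinely the only degeneracy.
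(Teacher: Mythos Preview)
Your proposal is correct and follows essentially the same three-factor decomposition as the paper: choose the $\ell_\lambda$ nonzero positions (yielding $\binom{n}{\ell_\lambda}$), assign the parts of $\lambda$ to those positions (yielding $\card{\Pi_\lambda}$), and then select a ring element of the prescribed Lee weight at each position (yielding $2^{\ell_\lambda}$ or $2^{\ell_\lambda - c_{r,\lambda}}$ according to the parity of $m$). The only cosmetic difference is that the paper justifies the factor $\binom{n}{\ell_\lambda}$ via a ``selection with repetition'' argument about distributing the zeros into gaps, whereas you simply choose the support directly; your version is if anything more transparent.
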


\begin{proof}
    Recall from Definition \ref{def:type_lambda} that $\typeset{t, \lambda}{n}$ consists of all length $n$ vectors $\vecx$ whose nonzero entries are in one-to-one correspondence with the parts of $\lambda$. Let $x_{i_1}, \dots , x_{i_{\ell_\lambda}}$ denote the nonzero positions of $\vecx$ and let us first consider the case where 
    \begin{align}\label{equ:cosupport_partsizes}
        \LW (x_{i_1}) = \lambda_1,\; \dots,\; \LW (x_{i_{\ell_\lambda}}) = \lambda_{\ell_\lambda}.
    \end{align}
    Finding the number of such vectors relies on the
    ``selection with repetition'' problem \cite[Section 1.2]{jukna2011extremal}, which implies that this number is exactly $\textstyle \binom{\text{number of zeros}\; + \; \text{free spaces} \; - \; 1 }{\text{free spaces} \; - \; 1}$, i.e.
    \begin{align*}
         \binom{(n-\ell_\lambda) \; + \; (\ell_\lambda+1) \; - \; 1 }{(\ell_\lambda+1) \; - \; 1} = \binom{n}{\ell_\lambda},
    \end{align*}
    where with ``free spaces'' we mean all the possible gaps in front, between and at the end of the parts of $\lambda$.\\
    If $m$ is odd, the number $n_i$ of elements in $\intring_m$ having a nonzero Lee weight $i$ is always $2$ for every possible Lee weight $i \in \set{1, \dots, r}$. Hence, there are $2^{\ell_\lambda} \binom{n}{\ell_\lambda}$ vectors satisfying \eqref{equ:cosupport_partsizes}. On the other hand, if $m$ is even, then $n_i = 2$ for $i \in \set{1, \dots , r -1}$ and $n_r = 1$. If we define $c_{r, \lambda} = \card{ \set{ i \in \set{ 1, \dots, \ell_\lambda} \, | \, \lambda_i = r } }$ to be the number of parts of $\lambda$ equal to $r$, then the number of parts of $\lambda$ that can be flipped is $2^{\ell_\lambda - c_{r, \lambda}}$. Hence, the number of vectors satisfying \eqref{equ:cosupport_partsizes} is $2^{\ell_\lambda - c_{r, \lambda}} \binom{n}{\ell_\lambda}$.\\
    Finally, since the ordering of the nonzero elements of $\vecx$ is not necessarily the same as the order of the parts of $\lambda$, we multiply $\binom{n}{\ell_\lambda}$ by the number of permutations $\card{\Pi_\lambda}$ of $\lambda$ and obtain the desired result.
\end{proof}

Finally, the actual vector construction over $\intring_m$, described in Algorithm \ref{alg:constr}, mainly consists of picking a partition $\lambda \in \parts{t}{r}$ of the Lee weight $t$ with part sizes not exceeding $r$. The probability of $\vecx \in \unilee$ with weight decomposition $\lambda \in \parts{t}{r}$ is given by 
\[
    p_\lambda := \frac{\card{\mathcal{V}_{t, \lambda}^{(n)}}}{\sum_{\Tilde{\lambda}\in \mathcal{P}_{r(t)}} \card{\mathcal{V}_{t, \Tilde{\lambda}}^{(n)}}}.
\] 
The idea is to choose the integer partition according to the probability mass function $\mathcal{X}_{t, m}^{(n)}$ defined by the probabilities $p_\lambda$, for $\lambda \in \parts{t}{r}$. We will denote this procedure by $$\lambda \overset{\mathcal{X}_{t, m}^{(n)}}{\longleftarrow} \parts{t}{r}.$$

We then randomly flip the elements of the partition modulo $m$ and assign these values to randomly chosen positions of the error vector. Choosing an element $a$ uniformly at random from a given set $\mathcal{A}$ will be denoted by $a \overset{\$}{\longleftarrow}\mathcal{A}$. We want to emphasize at this point that for fixed parameters $n, t$ and $m$ the computation of $\mathcal{X}_{t, m}^{(n)}$ needs to be done only once at the beginning, since the distribution is only dependent on these parameters and does not change anymore.

\begin{algorithm}
\caption{Drawing a vector uniformly at random from $\unilee$}
\label{alg:constr}
    \begin{algorithmic}[1]
        \Require $n, m, t \in \mathbb{N}_{>0}$, distribution $\mathcal{X}_{t, m}^{(n)}$
        \Ensure $\mathbf{e} \overset{\$}{\gets} \unilee$
        
        \State $\lambda \overset{\mathcal{X}_{t, m}^{(n)}}{\longleftarrow} \parts{t}{r}$
        \State $F = \set{f_1,\dots,  f_{\ell_\lambda}} \overset{\$}{\gets} \set{ \pm 1}^{\ell_\lambda}$
        \State $\supp (\mathbf{e} ) \overset{\$}{\gets} \set{S \subset \set{1, \dots, n} \, : \, \card{S} = \ell_\lambda}$
        \For{$i = 1, \dots, n$}
            \If{$i \in \supp(\mathbf{e})$}
                \State $e_i \gets f_i \cdot \lambda_i$
            \Else
                \State $e_i = 0$
            \EndIf
        \EndFor
        
        \State \Return $\mathrm{random\_permutation}(\mathbf{e})$
    \end{algorithmic}
\end{algorithm}

\begin{theorem}\label{thm:uniform}
    Let $n, m$ and $t$ be positive integers. Algorithm \ref{alg:constr} draws a vector uniformly at random among $\unilee$.
\end{theorem}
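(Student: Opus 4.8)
The plan is to decompose the target set $\unilee$ according to weight decomposition and then show that Algorithm~\ref{alg:constr} selects each decomposition in exactly the right proportion and, conditioned on that decomposition, hits each compatible vector equally often. First I would observe that every $\mathbf{e} \in \unilee$ has a well-defined weight decomposition: the multiset of Lee weights of its nonzero coordinates is a partition of $t = \LW(\mathbf{e})$ into parts of size at most $r$, hence an element of $\parts{t}{r}$. This gives the disjoint decomposition
\begin{align}
    \unilee = \bigsqcup_{\lambda \in \parts{t}{r}} \typeset{t, \lambda}{n},
\end{align}
so that $\card{\unilee} = \sum_{\lambda \in \parts{t}{r}} \card{\typeset{t, \lambda}{n}}$, with each summand available in closed form from Lemma~\ref{lemma:card_typeset}.

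Next I fix an arbitrary target $\mathbf{e}^\star \in \unilee$ with decomposition $\lambda^\star \in \parts{t}{r}$ and compute $\prob(\mathbf{e} = \mathbf{e}^\star)$ by conditioning on the partition drawn in line~1. The algorithm can only output $\mathbf{e}^\star$ if it first selects $\lambda = \lambda^\star$, which occurs with probability $p_{\lambda^\star} = \card{\typeset{t, \lambda^\star}{n}}/\card{\unilee}$ by construction of $\mathcal{X}_{t, m}^{(n)}$. It therefore suffices to prove the conditional claim that, given $\lambda = \lambda^\star$, the remaining choices (support and signs, together with the placement of parts onto the support) yield $\mathbf{e}^\star$ with probability exactly $1/\card{\typeset{t, \lambda^\star}{n}}$; the product of the two probabilities then equals $1/\card{\unilee}$, independently of $\mathbf{e}^\star$, which is the assertion.

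The heart of the proof is thus this conditional uniformity over $\typeset{t, \lambda^\star}{n}$, and I would establish it by matching each independent random choice of the algorithm to one factor of the cardinality formula of Lemma~\ref{lemma:card_typeset}: the uniform support realizes the factor $\binom{n}{\ell_{\lambda^\star}}$, the assignment of the multiset of parts to the $\ell_{\lambda^\star}$ chosen coordinates realizes the $\card{\Pi_{\lambda^\star}}$ distinct arrangements of a partition with repeated parts, and the independent uniform signs $f_i \in \set{\pm 1}$ realize the factor $2^{\ell_\lambda}$ in the odd case. The delicate step — and the one I expect to be the main obstacle — is to show that the map from the algorithm's elementary outcomes to vectors is \emph{constant-to-one}, so that conditioning on $\lambda^\star$ is genuinely uniform. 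The subtlety is concentrated in the even-$m$ case: whenever a part equals $r = m/2$, the values $+r$ and $-r$ coincide in $\intring_m$, so the two sign choices for that coordinate collapse, and the sign map becomes $2^{c_{r,\lambda^\star}}$-to-one rather than injective. I would verify that this degeneracy is exactly the source of the reduced exponent $2^{\ell_\lambda - c_{r,\lambda}}$ in the lemma, so that every vector of $\typeset{t, \lambda^\star}{n}$ is produced by the same number of $(\text{support},\text{placement},\text{sign})$ outcomes. Summing the resulting probability $1/\card{\unilee}$ over all $\lambda^\star$, or equivalently invoking the disjoint decomposition above, completes the argument.
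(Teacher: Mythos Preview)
Your proposal is correct and follows essentially the same approach as the paper: decompose $\unilee$ as the disjoint union $\bigsqcup_{\lambda\in\parts{t}{r}}\typeset{t,\lambda}{n}$, draw $\lambda$ with probability $p_\lambda$ proportional to $\card{\typeset{t,\lambda}{n}}$ via Lemma~\ref{lemma:card_typeset}, and then argue conditional uniformity within the chosen class. Your treatment is in fact more thorough than the paper's terse proof---in particular, your explicit handling of the $2^{c_{r,\lambda}}$-to-one sign collapse when $m$ is even and $r=m/2$ is a detail the paper's argument passes over silently.
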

\begin{proof}
    First note that $\unilee = \bigsqcup_{\lambda \in \parts{t}{r}} \typeset{t, \lambda}{n}$, where $\bigsqcup$ denotes the disjoint union of sets. Hence, we want to pick $\lambda \in \parts{t}{r}$ such that all the vectors in $\unilee$ are equally probable to be drawn. The choice of $\lambda$ is decisive for the set $\typeset{t, \lambda}{n}$. Since $\card{\typeset{t, \lambda}{n}}$ changes with $\lambda$, we pick $\lambda$ according to distribution $p_\lambda$ from $\mathcal{X}_{t, m}^{(n)}$ using Lemma \ref{lemma:card_typeset} and the result follows.
\end{proof}

\section{Scalar Multiplication Problem}\label{sec:scalarLee}
While we know that the Hamming weight of a vector over a finite field is invariant under multiplication with a nonzero scalar, the Lee weight can possibly change. In this section, we analyze the behavior of the Lee weight of a vector when multiplied by a scalar. Recalling that the Lee metric coincides with the Hamming metric over $\intring_2$ and $\intring_3$, in the following we will focus only on the case where the Lee weight is different from the Hamming weight, i.e. we focus on $\intring_m$ with $m >3$.
\begin{remark}
    Even though we will not discuss the following, we want to emphasize at this point that the Hamming weight is \emph{not} invariant under multiplication with a nonzero scalar when working over a finite integer ring that is not a field.
\end{remark}

\subsection{Problem Statement}
We now establish bounds on the probability of reducing the Lee weight of a random vector by multiplying it with a random nonzero scalar.

\begin{problem}\label{prob:scalarLee}
    Consider the ring of integers $\intring_m$, with $m>3$. Given a random vector $\vecx \in \intring_m^n$ with Lee weight $\LW(\vecx) = t$ uniformly distributed in $\unilee$. Let $a$ be chosen uniformly at random from $\intring_m\backslash\set{0}$. Find the probability that the Lee weight of $a\cdot x$ is less than the Lee weight $t$ of $x$, i.e.
    \begin{align}\label{equ:proba_Problem_Lee}
        \prob\left( \LW(a\cdot x) < t \right).
    \end{align}
\end{problem}

For simplicity, let us define the following event 
\[
F:= \lbrace \LW(a\cdot \vecx) < t \rbrace.
\]
We denote by $Q_\vecx$ the empirical distribution of the entries of $\vecx$. Recall the distribution $P^\star$ defined in \eqref{eq:boltzmann}. We will rewrite $\prob (F)$ by distinguishing between vectors $\vecx$ with $Q_\vecx$ close to $P^\star$ and all others, where by ``close'' we mean with respect to the Kullback-Leibler divergence, i.e. $Q_\vecx$ satisfies $D( Q_\vecx \, || \, P^\star) < \varepsilon$ for some $\varepsilon > 0$ small. We have that
\begin{align}
    \prob (F)	\leq &\, \prob \left( \LW(a\cdot \vecx) < t \, | \, D( Q_\vecx \, || \, P^\star) < \varepsilon \right) \notag \\
	                & +  \prob \left( D( Q_\vecx \, || \, P^\star) \geq \varepsilon \right).\label{equ:upperbound_reducing_weight}
\end{align}

Note that the probability $\prob (F)$ is dependent on three parameters: the length $n$ of the constructed vector $\vecx$, the size $m$ of the integer ring and the given Lee weight $t$ of $\vecx$. The evaluation of the bound \eqref{equ:upperbound_reducing_weight} is challenging  for $m>3$, finite $n$ and generic $t$. In the following subsection we will describe how to attack the problem for $n$ large.

\subsection{Asymptotic Analysis}
Let us focus now on the asymptotic regime, i.e. where the block length $n$ tends to infinity. Note here that we let $\LW(\vecx) = t$ grow linearly with $n$. Let us denote by $U(\intring_m)$ the uniform distribution over $\intring_m$ and let $E$ be the set of probability distributions over $\intring_m$ with an average Lee weight $\delta := t/n$, i.e.
    $$E\! := \!\left\lbrace \! p = (p_0, \dots , p_{m-1}) \Big| \, \sum_{i = 0}^{m-1} \! p_i = 1 \text{ and} \sum_{i = 0}^{m-1}\! p_i\!\LW(i) = \delta \! \right\rbrace$$
Hence, a straightforward application of Theorem \ref{thm:CondLT} yields the following corollary.

\begin{corollary}\label{thm:asymptotic}
    Let $\vecx = (x_1, \hdots, x_n) \in \intring_m^n$ a random vector drawn uniformly from $\leeSphere{\delta n}{m}$. Then, for every $\varepsilon > 0$ it holds 
    \[
    \prob \left( D( Q_\vecx \, || \, P^\star) \geq \varepsilon \right) \longrightarrow 0 \text{ as } n \longrightarrow \infty.
    \]
\end{corollary}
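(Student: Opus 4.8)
The plan is to recognize a uniform draw from $\unilee$ as a conditioned i.i.d.\ experiment and then feed it into Theorem \ref{thm:CondLT} with the correct reference distribution and constraint set. First I would observe that drawing $\vecx$ uniformly from $\unilee$ is identical to drawing $X_1,\dots,X_n$ i.i.d.\ according to the \emph{uniform} distribution $Q = U(\intring_m)$ and conditioning on the event $\{\sum_i \LW(X_i) = t\}$. Indeed, under i.i.d.\ uniform sampling every sequence in $\intring_m^n$ has the same probability $m^{-n}$, so conditioning on any set returns the uniform law on that set, and here the conditioning event is exactly $\unilee$. Writing $\delta = t/n$, this event coincides with $\{P_{X^n}\in E\}$ for the constraint set
\[
E \;=\; \Bigl\{\, P :\ \sum_{e=0}^{m-1}\LW(e)\,P(e) = \delta \,\Bigr\},
\]
and $Q_\vecx = P_{X^n}$ is precisely the empirical distribution featured in the theorem. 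Note the uniform choice of $Q$ is forced: a non-uniform reference would not return the uniform law on $\unilee$ after conditioning.

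Next I would verify the hypotheses of Theorem \ref{thm:CondLT} in the regime where the relative Lee weight $\delta$ is held fixed while $n\to\infty$ (so that $E$ does not move with $n$). The set $E$ is the intersection of the probability simplex with an affine hyperplane, hence closed and convex, and for any target $\delta$ differing from the mean Lee weight $\tfrac1m\sum_e\LW(e)$ of the uniform law we have $Q\notin E$. The information projection $P^\star = \arg\min_{P\in E} D(P\,\|\,Q)$ then coincides with the distribution of Lemma \ref{lemma:marginal_dist}: since $Q$ is uniform, $D(P\,\|\,U) = \log m - H(P)$, so minimizing the divergence over $E$ is exactly the entropy maximization under the Lee-weight constraint \eqref{eq:constraint} that produced the Boltzmann form \eqref{eq:boltzmann}. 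Thus the $P^\star$ supplied by the theorem is the $P^\star$ named in the corollary, and because it is strictly positive on $\intring_m$ all divergences below are finite.

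With the setup in place, the statement is precisely the type-concentration that sits at the heart of the Conditional Limit Theorem. Concretely, I would combine the $I$-projection (Pythagorean) inequality $D(P\,\|\,Q)\ge D(P\,\|\,P^\star)+D(P^\star\,\|\,Q)$, valid for every $P\in E$, with the method-of-types estimates $\prob(P_{X^n}=P)\le 2^{-nD(P\|Q)}$ and the Sanov lower bound $\prob(P_{X^n}\in E)\ge (n+1)^{-m}\,2^{-nD(P^\star\|Q)}$. Conditioning on $\{P_{X^n}\in E\}$ cancels the $D(P^\star\,\|\,Q)$ term and yields, for each type $P\in E$ with $D(P\,\|\,P^\star)\ge\varepsilon$, a conditional mass at most $(n+1)^m\,2^{-n\varepsilon}$; summing over the at most $(n+1)^m$ such types gives
\[
\prob\bigl(D(Q_\vecx\,\|\,P^\star)\ge\varepsilon\bigr)\;\le\;(n+1)^{2m}\,2^{-n\varepsilon}\;\longrightarrow\;0,
\]
which is the claim. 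The main obstacle is not this estimate but the bookkeeping around the hypotheses: one must fix $\delta = t/n$ so that $E$ is a single convex set, exclude the degenerate $\delta$ equal to the uniform mean (where $Q\in E$, $P^\star = U$, a case dispatched separately), and confirm that $E$ meets the type lattice for the relevant $n$ so the conditioning event has positive probability — the latter guaranteed because $\unilee$ is nonempty by the construction of Algorithm \ref{alg:constr}.
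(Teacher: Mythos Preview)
Your approach matches the paper's: reduce the uniform law on $\unilee$ to i.i.d.\ uniform sampling conditioned on the Lee-weight constraint, identify $P^\star$ as the $I$-projection of $U(\intring_m)$ onto that constraint (equivalently the entropy maximizer of Lemma~\ref{lemma:marginal_dist}), and appeal to Theorem~\ref{thm:CondLT}. Your write-up is in fact more complete, since you supply the Pythagorean/Sanov type-concentration bound explicitly, whereas the paper's one-line citation of Theorem~\ref{thm:CondLT}---stated there only as marginal convergence of $X_1$---leaves that step implicit.
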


\begin{proof}
Let $\vecx = (x_1, \hdots, x_n) \in \intring_m^n$ be a random vector whose entries are independent and uniformly distributed in $\intring_m$. The distribution of $\vecx$ is uniform on $\intring_m^n$, and hence on $\leeSphere{\delta n}{m}$. We have that 
\[
P^\star = \arg\min_{P \in E} D(P \, || \, U(\intring_m)).
\]
Then, by Theorem \ref{thm:CondLT}, we obtain the desired result.
\end{proof}

In fact, Theorem \ref{thm:asymptotic} allows to assume that the entries of a sequence $\vecx$ drawn uniformly in $\leeSphere{\delta n}{m}$ are distributed according to $P^\star$ as $n$ grows large. Hence, in the asymptotic regime, Problem \ref{prob:scalarLee} reduces to estimating the probability $\prob \left( \LW(a\cdot \vecx) \leq \LW(\vecx) \, |\, D( Q_\vecx \, || \, P^\star) < \varepsilon \right)$. In that case, we apply Definition \ref{def:leeweight} for the Lee weight of a vector $\vecx$. Then the assumption that the entries of $\vecx$ are distributed as in \eqref{eq:boltzmann} yields, in the limit of $n$ large, the following equivalent description of the desired probability\vspace{-5mm}

\begin{align}
    \lim_{n\longrightarrow\infty} \prob (F) = 
    \prob \Big( &\sum_{i=1}^{m-1}\! \me^{-\beta \LW(i)} \LW([a\cdot i]_m)\! \\  &<\! \sum_{i=1}^{m-1}\! \me^{-\beta \LW(i)} \LW(i) \Big)\label{equ:asympt_prob}
\end{align}
By Property \eqref{property:symm_leeweight}, we can run the sum only up to $r$. Nevertheless we need to distinguish between even or odd ring order $m$. In particular, for $m$ odd we rewrite \eqref{equ:asympt_prob} as
\begin{align}
    \lim_{n\longrightarrow\infty} \prob (F) = \prob \Big( 0 < \sum_{i=1}^{r} \me^{-\beta i}(i - \LW([a\cdot i]_m)) \Big)\label{equ:asympt_prob_odd}
\end{align}
whereas for $m$ even \eqref{equ:asympt_prob} is equivalent to \vspace{-5mm}

\begin{align}
    \lim_{n\longrightarrow\infty} \prob (F) = \prob \Big( 0 < &\sum_{i=1}^{r-1} 2\me^{-\beta i}(i - \LW([a\cdot i]_m))\\ &+ \me^{-\beta r}(r - \LW([a\cdot r]_m)) \Big)\label{equ:asympt_prob_even}
\end{align}
where $[a\cdot i]_m$ denotes the reduction of $a\cdot i \mod m$.\\

Since we want $\prob(F)$ to be small (or equal to zero), we need to understand under which circumstances the sums in \eqref{equ:asympt_prob_odd} and \eqref{equ:asympt_prob_even} are non-positive. Note that both $\sum_{i=1}^{r} \me^{-\beta i}(i - \LW([a\cdot i]_m))$ and $\sum_{i=1}^{r-1} 2\me^{-\beta i}(i - \LW([a\cdot i]_m)) + \me^{-\beta r}(r - \LW([a\cdot r]_m)$ are dependent on $m$ and $\beta$, where $\beta$ depends on $\delta$. If we fix these parameters, we are able to compute the sum and hence \eqref{equ:asympt_prob}. We therefore fix $m$ and evaluate the two expressions for different values of $\delta$. Let $\delta^\star$ denote the largest normalized Lee weight such that \eqref{equ:asympt_prob_odd} or rather \eqref{equ:asympt_prob_even} are equal to zero for every $\delta < \delta^\star$.
Table \ref{tab:dec_thresholds} shows the values of the threshold $\delta^\star$ for different ring orders $m$.\vspace{-2mm}

\renewcommand{\arraystretch}{1.2}
\begin{table}[H]
\caption{Maximal normalized Lee weight $\delta^\star$ over $\intring_m$ such that $ \prob (F) = 0$ as $n \longrightarrow \infty$, for some values of $m$ compared to the maximal possible normalized Lee weight $r$.}
\label{tab:dec_thresholds}\vspace{-2mm}
\begin{center}
$\begin{array}{c||ccccccccccc}
    m & 5 & 7 & 8  & 9 & 11 & 15 & 16 & 31 & 32 & 53\\
    \hline
    \hline
    r & 2 & 3 & 4 & 4 & 5 & 7 & 8 & 15 & 16 & 26\\
    \hline
    \delta^\star  & 1.2 & 1.714  & 2  & 1.962  & 2.727 & 3.310 & 4 & 7.741 & 8 & 13.245
\end{array}$
\end{center}
\end{table}
\vspace{-4mm}
Observe from Table \ref{tab:dec_thresholds} that for $m$ an odd prime power and for $\delta^\star = (m^2-1)/4m$ (i.e. the average Lee weight when choosing an element uniformly from $\intring_m$ \cite{wyner1968upper}) the Lee weight of a vector $\vecx \in \intring_m^n$ can never be reduced when multiplied by a nonzero scalar. This fact can be established by observing that the multiplication of a random variable $X$ in $\intring_m$ by $a \in \intring_m\setminus\set{0}$ induces a permutation of the distribution. Moreover, if $X$ is distributed according to $P^\star$ with $\beta >0$, the permutation that maximizes $\mathbb{E}(\LW(aX))$ is the identity, i.e., $a=1$. On the contrary, if $\beta<0$, the identity permutation ($a=1$) minimizes $\mathbb{E}(\LW(aX))$. The result follows by observing that $\beta>0$ implies that the average Lee weight is $\delta < (m^2-1)/4m$.  
    
Note that the same result follows for any $m$ if $a \in \intring_m^\times$ is a unit modulo $m$.
Moreover, if $m$ is a power of $2$, the threshold is $$\delta^\star = m/4.$$
\vspace{-5mm}

\section{Conclusions}\label{sec:conc}

In this work we have introduced an algorithm for the construction of error patterns over $\intring_m^n$ of a fixed Lee weight. The algorithm is efficient compared to straightforward approaches, which are more involved in terms of computation and memory. The proposed algorithm is based on the idea of subdividing the tasks into subtasks, which are more or less easy to solve. The procedure is dominated by the computation of the distribution used to choose the underlying integer partition of a vector's Lee weight decomposition. For a fixed Lee weight $t$, this distribution can be pre-computed. We have shown that the presented algorithm draws a vector uniformly at random among all vectors of the same length and Lee weight. This property is important for cryptographic applications in the context of Lee metric code-based cryptography in order to avoid information leakage on the structure of the error pattern. Additionally, the results on the constant-weight Lee channel together with the random construction of sequences of fixed Lee weight were used to derive the probability of reducing the Lee weight of a vector over $\intring_m^n$ when multiplying it by a random nonzero element of $\intring_m$, for the limit case where the sequence length grows large. An open problem is to characterize this probability in the finite sequence length regime.


\bibliography{IEEEabrv,biblio}


\end{document}